\newcolumntype{P}{>{\raggedleft\arraybackslash}p{25pt}}
\newcolumntype{C}{p{10pt}}
\newtheorem{theorem}{Theorem}
\newcommand{\lp}[1]{\ref{lp-#1}}
\newcommand{\nlp}[1]{\ref{nlp-#1}}
\newcommand{\milp}[1]{\ref{milp-#1}}
\newcommand{\eq}[1]{(\ref{eq-#1})}
\newcommand{\fig}[1]{Figure~\ref{fig:#1}}
\newcommand{\algo}[1]{Algorithm~\ref{algo:#1}}
\newcommand{\tab}[1]{Table~\ref{tab-#1}}
\newcommand\R{\mathcal{R}}
\newcommand\I{\mathcal{I}}
\newcommand\J{\mathcal{J}}
\newcommand\K{\mathcal{K}}
\newcommand\C{\mathcal{C}}
\newcommand\N{\mathcal{N}}
\newcommand\B{\mathcal{B}}
\newcommand\V{\mathcal{V}}
\newcommand\Penal{\mathcal{P}}
\newcommand\A{\mathcal{A}}
\newcommand{\nikos}[1]{{#1}}
\title{Fast Reconstruction of Compact Context-Specific Metabolic Network Models}
\author{Nikos Vlassis$^*$ \qquad
Maria Pires Pacheco$^\dag$ \qquad
Thomas Sauter$^\dag$\\
~\\
$^*$Luxembourg Centre for Systems Biomedicine, University of Luxembourg \\
$^\dag$Life Sciences Research Unit, University of Luxembourg}
\date{\today} 
\begin{document}
\maketitle

\begin{abstract}
Systemic approaches to the study of a biological cell or tissue rely increasingly on the use of context-specific metabolic network models. The reconstruction of such a model from high-throughput data can routinely involve large numbers of tests under different conditions and extensive parameter tuning, which calls for fast algorithms. 
We present {\rm\textsc{fastcore}}, a generic algorithm for reconstructing context-specific metabolic network models from global genome-wide metabolic network models such as Recon\,X. {\rm\textsc{fastcore}} takes as input a core set of reactions that are known to be active in the context of interest (e.g., cell or tissue), and it searches for a flux consistent subnetwork of the global network that contains all reactions from the core set and a minimal set of additional reactions. 
Our key observation is that a minimal consistent reconstruction can be defined via a set of sparse modes of the global network, and {\rm\textsc{fastcore}} iteratively computes such a set via a series of linear programs. Experiments on liver data demonstrate speedups of several orders of magnitude, and significantly more compact reconstructions, over a rival method. Given its simplicity and its excellent performance, {\rm\textsc{fastcore}} can form the backbone of many future metabolic network reconstruction algorithms.
\end{abstract}

\section{Introduction}

Cell metabolism is known to play a key role in the pathogenesis of various diseases \cite{Debardinis12} such as Parkinson's disease \cite{Pourfar13} and cancer \cite{Hiller13}. 
The study of human metabolism has been greatly advanced by the development of computational models of metabolism, such as Recon\,1 \cite{Duarte07}, the Edinburgh human metabolic network  \cite{Hao10}, and Recon\,2 \cite{Thiele13}. These are genome-scale metabolic network models that have been reconstructed by combining various sources of `omics' and literature data, and they involve a large set of biochemical reactions that can be active in different contexts, e.g., different cell types or tissues \cite{Thiele10}.

To maximize the predictive power of a metabolic model when conditioning on a specific context, for instance the energy metabolism of a neuron or the metabolism of liver, recent efforts go into the development of {\em context-specific} metabolic models \cite{Becker08,Christian09,Jerby10,Chang10,Lewis10,Agren12}. These are network models that are derived from global models like Recon\,1, but they only contain a subset of reactions, namely, those reactions that are active in the given context. Such context-specific metabolic models are  known to exhibit superior explanatory and predictive power than their global counterparts \cite{Jerby10,Folger11,Bordbar12}.

Most algorithms for context-specific metabolic network reconstruction \nikos{(see Section 2.5 for a short overview)} first identify a relevant subset of reactions according to some `omics' information (typically expression data and bibliomics), and then search for a subnetwork of the global network that satisfies some mathematical requirements and contains all (or most of) these reactions \cite{Becker08,Shlomi08,Jerby10,Price10,Jensen11,Agren12}. 
The mathematical requirements are typically imposed via flux balance analysis, which characterizes the steady-state distribution of fluxes in a metabolic network via linear constraints that are derived from the stoichiometry of the network and physical conservation laws \cite{Schuster94,Stephanopoulos98,Price04,Gagneur04,Fleming12}. 
The search problem may target the optimization of a specific functionality of the model (e.g., biomass production) or some other objective~\cite{Blazier12}, and it may involve repeated tests under different conditions and parameter tuning \cite{Becker08,Folger11,Orth11,Wang12}. The latter calls for fast algorithms.

We present {\rm\textsc{fastcore}}, a generic algorithm for context-specific metabolic network reconstruction. {\rm\textsc{fastcore}} takes as input a core set of reactions that are supported by strong evidence to be active in the context of interest. Then it searches for a {\em flux consistent} subnetwork of the global network that contains all reactions from the core set and a minimal set of additional reactions. Flux consistency implies that each reaction of the network is active (i.e., has nonzero flux) in at least one feasible flux distribution \cite{Schuster94,Acuna09}. An attractive feature of {\rm\textsc{fastcore}} is its generality: As it only relies on a preselected set of reactions and a simple mathematical objective (flux consistency), it can be applied in different contexts and it allows the integration of different pieces of evidence (`multi-omics') into a single model. 

Computing a minimal consistent reconstruction from a subset of reactions of a global network is, however, an NP-hard problem \cite{Acuna09}, and hence some approximation is in order. 
Our key observation is that a minimal consistent reconstruction can be defined via a set of {\em sparse} modes of the global network, and {\rm\textsc{fastcore}} is designed to compute a minimal such set.
Every iteration of the algorithm computes a new sparse mode via two linear programs that aim at maximizing the support of the mode inside the core set while minimizing that quantity outside the core set. {\rm\textsc{fastcore}}'s search strategy is in marked contrast to related approaches, in which the search for a minimal consistent reconstruction involves, for instance, incremental network pruning \cite{Jerby10}. 
{\rm\textsc{fastcore}} is simple, devoid of free parameters, and its performance is excellent in practice: As we demonstrate on experiments with liver data, {\rm\textsc{fastcore}} is several orders of magnitude faster, and produces much more compact reconstructions, than the main competing algorithm MBA \cite{Jerby10}.

\section{Methods}

\subsection{Background}

A metabolic network of $m$ metabolites and $n$ reactions is represented by an $m \times n$  {\em stoichiometric} matrix $S$, where each entry $S_{ij}$ contains the stoichiometric coefficient of metabolite $i$ in reaction $j$. 
A {\em flux} vector $v\in\mathbb{R}^n$ is a tuple of reaction rates, $v = (v_1,\ldots,v_n)$,  where $v_i$ is the rate of reaction $i$ in the network. 
Reactions are grouped into {\em reversible} ones ($\R$) and {\em irreversible} ones ($\I$). For a reaction $i \in \I$ \nikos{it holds that} $v_i \geq 0$; this and other imposed flux bounds, e.g., lower and upper bounds per reaction, are collectively denoted by $\B$ (which defines a convex set).
A flux vector is called {\em feasible} or a {\em mode} if it satisfies a set of steady-state mass-balance constraints that can be compactly expressed as:
\begin{equation}
  Sv = 0, \quad v \in \B \, .
  \label{eq-mb}
\end{equation}
An {\em elementary} mode is a feasible flux vector $v \neq 0$ with minimal support, that is, there is no other feasible flux vector $w \neq 0$ with $supp(w) \subset supp(v)$, where \nikos{$supp(v) = \big\{j \in \{1,2,\ldots, n \}: v_j \neq 0 \big\}$} is the support (i.e., the set of nonzero entries) of $v$ \cite{Schuster94,Gagneur04}.
\nikos{A reaction $i$ is called {\em blocked} if it cannot be active under any mode, that is, there exists no mode $v\in\mathbb{R}^n$  such that $v_i \neq 0$ (in practice $|v_i| \ge \varepsilon$, for some small positive threshold $\varepsilon$).
A metabolic network model that contains no blocked reactions is called {\em (flux) consistent} \cite{Schuster94,Acuna09}.
}

\subsection{Network consistency testing}

Given a metabolic network model with stoichiometric matrix $S$, a problem of interest is to test whether the network is consistent or not. Additionally, if the network is inconsistent, it would be desirable to have a method that detects \nikos{all blocked reactions}.

\begin{figure}
\centering
\begin{tikzpicture}[scale=0.69,line join=bevel]
\large%
\node (A) at (86bp,56bp) [draw,circle] {A};
  \node (C) at (165bp,10bp) [draw,circle] {C};
  \node (B) at (165bp,102bp) [draw,circle] {B};
  \node (D) at (245bp,56bp) [draw,circle] {D};
  \node (node1) at (9bp,56bp) [draw,circle,densely dotted] {~};
  \node (node2) at (323bp,56bp) [draw,circle,densely dotted] {~};
  \draw [->,thick,] (C) ..controls (186.42bp,22.041bp) and (210.36bp,36.163bp)  .. (D);
  \definecolor{strokecol}{rgb}{0.0,0.0,0.0};
  \pgfsetstrokecolor{strokecol}
  \draw (211bp,22bp) node {$v_5$};
  \draw [->,thick] (A) ..controls (114.95bp,56bp) and (154bp,56bp)  .. (154bp,56bp) .. controls (154bp,56bp) and (197.33bp,56bp)  .. (D);
  \draw (165bp,65bp) node {$v_3$};
  \draw [->,thick] (D) ..controls (267.59bp,56bp) and (288.28bp,56bp)  .. (node2);
  \draw (285bp,65bp) node {$v_6$};
  \draw [<->,thick] (A) ..controls (107.17bp,68.049bp) and (130.85bp,82.194bp)  .. (B);
  \draw (117bp,88bp) node {$v_2$};
  \draw [->,thick] (node1) ..controls (29.338bp,56bp) and (50.316bp,56bp)  .. (A);
  \draw (47bp,65bp) node {$v_1 $};
  \node (nodetmp) at (66bp,47bp)  {${\scriptscriptstyle 2}$};
  \draw [->,thick] (A) ..controls (107.17bp,43.951bp) and (130.85bp,29.806bp)  .. (C);
  \draw (117bp,24bp) node {$v_4$};
\end{tikzpicture}
\caption{A metabolic network with one \nikos{blocked} reaction (A$\leftrightarrow$B).
\nikos{Note that A appears with stoichiometric coefficient 2 in the boundary reaction $\rightarrow$2A.}
\label{fig:example}}
\end{figure}
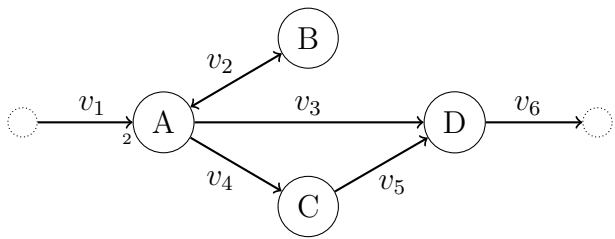

It has been suggested that network consistency can be detected by a single linear program (LP) \cite{Acuna09}. The idea is to first convert each reversible reaction into two irreversible reactions (and define a reversible flux as the difference of two irreversible fluxes), and then test if the minimum feasible flux on the new set $\J$ of irreversible-only reactions is strictly positive (in practice, at least~$\varepsilon$). This is equivalent to testing if the following LP is feasible:
\begin{equation}   	  
  \begin{aligned} 
  \max_{v,z}  	&& z 	&&& \\
  \mbox{s.t.} 	&& z 	& \ge \varepsilon	& \quad & z \in \mathbb{R} \\
  				&& v_i 	& \ge z	 			& \quad & \forall i \in \J  \\
				&& Sv 	& = 0 				& \quad & v \in \B \, .
  \end{aligned}
  \tag{LP-\theequation} \label{lp-acuna} \stepcounter{equation}
\end{equation}
This test of consistency, however, can produce spurious solutions. In \fig{example} we show a toy metabolic network comprising four metabolites (A,B,C,D) and six reactions annotated with corresponding fluxes $v_1,\ldots, v_6$. \nikos{Fluxes are bounded as $0 \leq v_i \leq 3$ for $i\neq 2$, and $|v_2|\leq 3$}. All stoichiometric coefficients are equal to one, except for the {reaction $\rightarrow$2A.} The only reversible reaction is A$\leftrightarrow$B, which is a dead-end reaction and therefore \nikos{blocked}, whereas all other reactions are irreversible and \nikos{unblocked}. After converting A$\leftrightarrow$B to a pair of irreversible reactions, \lp{acuna} achieves optimal value $z^*=1.5$, which implies (wrongly) that the network is consistent. The test here fails because the two irreversible copies of A$\leftrightarrow$B have equal flux at the solution, thereby nullifying the actual net flux of A$\leftrightarrow$B.

A straightforward solution to the problem would involve iterating through all reactions, computing the maximum and minimum feasible flux of each reaction via an LP that satisfies the constraints in \eq{mb}. 
\nikos{Reactions with minimum and maximum flux zero would then be blocked.}
This is the idea behind the FVA \nikos{(Flux Variability Analysis)} algorithm and the {\em reduceModel} function of the COBRA toolbox \cite{Mahadevan03,COBRA2}. However, iterating through all reactions can be inefficient. A faster variant is fastFVA \cite{Gudmundsson10}, which achieves acceleration over FVA via LP warm-starts. Another fast algorithm is CMC \nikos{(CheckModelConsistency)} \cite{Jerby10}, which involves a series of LPs, where each LP maximizes the sum of fluxes over a subset $\J$ of reactions:
\begin{equation}
  \begin{aligned} 
    \max_{v} 		&& \sum_{j \in \J} v_j  \hspace{-15pt} \\
    \mbox{s.t.}	&& Sv 	& = 0		& \quad & v \in \B \, .
  \end{aligned}
  \tag{LP-\theequation} \label{lp-jerby} \stepcounter{equation}
\end{equation}
The set $\J$ is initialized by $\J = \R \cup \I$ (all reactions in the network), and it is updated after each run of \lp{jerby} so that it contains the reactions whose consistency has not been established yet. When $\J$ cannot be reduced any further, we can reverse the signs of the columns of~$S$ corresponding to the reversible reactions in $\J$ and resume the iterations. Eventually, all remaining reactions may have to be tested one by one for consistency, as in FVA. Such an iterative scheme is complete, in the sense that it will always report consistency if the network is consistent, and if not, it will reveal the set of \nikos{blocked} reactions. However, as we will clarify in the next section, \lp{jerby} is not optimizing the `correct' function, which may result in unnecessarily many iterations. For example, when applied to the network of \fig{example}, \lp{jerby} will pick up the elementary mode that corresponds to the pathway A$\rightarrow$C$\rightarrow$D (because this pathway achieves maximum sum of fluxes {$v_1 + v_4 + v_5 + v_6 = 1.5 + 3 + 3 + 3$}), and it will set $v_3=0$. To establish the consistency of the reaction A$\rightarrow$D, an additional run of \lp{jerby} would be needed, where the set $\J$ would only involve  the reactions A$\leftrightarrow$B and A$\rightarrow$D. Hence, an iterative algorithm like CMC that relies on \lp{jerby} would need two iterations to detect the consistent part of this network. However, one LP suffices to detect \nikos{the consistent subnetwork} in this example, as we explain in the next section. {In more general problems involving larger and more realistic networks, CMC may involve unnecessarily many iterations, as we demonstrate in the experiments.}

\subsection{Fast \nikos{consistency testing}}

In most problems of interest there will be no single mode that renders the whole network consistent, and an iterative algorithm like the one described in the previous section must be used. For performance reasons it would therefore be desirable to be able to establish the consistency of as many reactions as possible in each iteration of the algorithm. 

Since consistency implies nonzero fluxes, it is sufficient to optimize a function that just `pushes' all fluxes away from zero. Formally, this amounts to searching for modes $v$ whose {\em cardinality}---denoted by $card(v)$ and defined as \nikos{$card(v)= \# supp(v)$}, i.e., the number of nonzero entries of $v$---is as large as possible. Directly maximizing $card(v)$ is, however, not straightforward, \nikos{for the following reasons}: First, the $card$ function is quasiconcave only for $v\in\mathbb{R}^n_+$ (the nonnegative orthant), and it is nonconvex for general $v\in\mathbb{R}^n$ \cite{Boyd04}. Second, even if we restrict attention to nonnegative fluxes in each iteration (which we can do without loss of generality by flipping the signs of the corresponding columns of $S$), it is not obvious how to efficiently maximize the quasiconcave $card(v)$. Third, in practice consistency implies fluxes that are $\varepsilon$-distant from zero, in which case some adaptation of the $card$ function is in order.

\nikos{Here we propose an approach to approximately maximize $card(v)$ over a nonnegative flux subspace indexed by a set of reactions $\J$. First note that the cardinality function can be expressed as 
\begin{equation}
  card(v) = \sum_{i \in \J} \theta(v_i) \, ,
  \label{eq-card}
\end{equation}
}
where $\theta: \mathbb{R} \rightarrow \{0,1\}$ is a step function:
\begin{equation}
  \theta(v_i) = \left\{ 
  \begin{array}{ll}
  0 & \mbox{if} \ v_i=0 \\
  1 & \mbox{if} \ v_i>0 \, .
  \end{array} 
  \right. 
\end{equation}
The key idea is to approximate the function $\theta$ by a concave function that is the minimum of a linear function and a constant function:
\begin{equation}
 \theta(v_i) \approx \min \{ \frac{v_i}{\varepsilon}, 1 \}  \, ,
 \label{eq-theta}
\end{equation}
where $\varepsilon$ is the flux threshold. 
\nikos{The problem of approximately maximizing $card(v)$ can then be cast as an LP: We introduce an auxiliary variable $z_i \in \mathbb{R}_+$ for each flux variable $v_i$, for $i \in \J$, and take epigraphs \cite{Boyd04}, in which case maximizing $card(v) = \sum_{i \in \J} \theta(v_i)$ can be expressed as
\begin{equation*}
  \begin{aligned} 
    \max_{v,z} 	&& \sum_{i \in \J} z_i \hspace{-15pt} \\
    \mbox{s.t.} 	&& z_i 	& \leq \theta(v_i) 			& \quad & \forall i \in \J, \ z_i \in\mathbb{R}_+ \\
    				&& v_i 	& \geq 0 		& \quad & \forall i \in \J \\
                && Sv 	& = 0  						& \quad & v \in \B \, .
  \end{aligned}
\end{equation*}
Using \eq{theta} and assuming constant $\varepsilon$, this simplifies to
}
\begin{equation}
  \begin{aligned} 
    \max_{v,z} 	&& \sum_{i \in \J} z_i \hspace{-15pt} \\
    \mbox{s.t.} 	&& z_i 	& \in [0,\varepsilon] 			& \quad & \forall i \in \J, \ z_i \in\mathbb{R}_+ \\
				&& v_i 	& \geq z_i 		& \quad & \forall i \in \J \\
                && Sv 	& = 0  						& \quad & v \in \B \, .
  \end{aligned}
  \tag{LP-\theequation} \label{lp-cardinality} \stepcounter{equation}
\end{equation}
Note that \lp{cardinality} tries to maximize the number of feasible fluxes in $\J$ whose value is at least $\varepsilon$ {(contrast this with \lp{acuna})}.

Returning to the network of \fig{example}, if $\J$ comprises all network reactions, then note that the flux vector $[v_1,v_2,v_3,v_4,v_5,v_6] = [\varepsilon,0,\varepsilon,\varepsilon,\varepsilon,2\varepsilon]$ is an optimal solution of \lp{cardinality}. Hence, a single run of the latter can \nikos{detect all unblocked reactions} of that network. More generally, a single run of \lp{cardinality} on an arbitrary subset $\J$ of a given network will typically \nikos{detect all unblocked {\em irreversible}} reactions of $\J$.
The intuition is that \lp{cardinality} prefers flux `splitting' over flux `concentrating' in order to maximize the number of participating reactions in the solution, which, in the case of irreversible reactions, corresponds to flux cardinality maximization. 

By construction, the above approximation of the cardinality function applies only to nonnegative fluxes. In order to deal with reversible reactions that can also take negative fluxes, we can embed \lp{cardinality} in an iterative algorithm (as in the previous section), in which reversible reactions are first considered for positive flux via \lp{cardinality}, and then they are considered for negative flux. The latter is possible by flipping the signs of the columns of the stoichiometric matrix that correspond to the reversible reactions under testing, in which case the fluxes of the transformed model are again all nonnegative, and the above approximation of the cardinality function can be used. 
This gives rise to an algorithm for detecting the consistent part of a network that we call {\rm \textsc{fastcc}} (for fast consistency check). Since {\rm \textsc{fastcc}} is just a variant of {\rm\textsc{fastcore}}, we defer its detailed description until the next section.

\nikos{Independently to this work, a similar approach to network consistency testing was recently proposed, called OnePrune \cite{Dreyfuss13}. OnePrune first converts each reversible reaction into two irreversible reactions, forming an augmented set $\J$ of irreversible-only reactions (as in \lp{acuna} above), and then it employs an LP that coincides with \lp{cardinality} for the above choice of $\J$ and $\varepsilon=1$. However, such an approach is prone to the same drawback as \lp{acuna}, namely, that the two irreversible copies of a blocked reaction can carry equal positive flux at the solution of \lp{cardinality} due to the presence of cycles introduced by the transformation. The authors acknowledge this problem but they do not fully resolve it. In our case, we avoid this problem by working with the original reactions and a series of LPs with appropriate sign flips of the stoichiometric matrix, thereby guaranteeing the completeness of the  algorithm.}

\subsection{Context-specific network reconstruction}

The reconstruction problem involves computing a minimal consistent network from a \nikos{global network and a `core' set of reactions} that are known to be active in a given context. Formally, given \nikos{(i) a {\em consistent} global network $\{ \N, S_\N  \}$ with reaction set $\N=\{1,2,\ldots,n\}$ and stoichiometric matrix $S_\N$, and (ii) a set $\C \subset \N$, the problem is to find the smallest set $\A \subseteq \N$ such that $\C\subseteq\A$ and the subnetwork $\{ \A, S_\A  \}$ induced by the reaction set $\A$ is consistent. (By $S_{\!\A}$ we denote the submatrix of $S_\N$ that contains only the columns indexed by $\A$.)} This problem is known to be NP-complete~\cite{Acuna09}, suggesting that a practical solution should entail some approximation. (We note that Acu{\~n}a et al.~\cite{Acuna09} prove NP-completeness of this problem by noting that a special case involves $\C$ being the empty set, in which case the problem comes down to finding the smallest elementary mode of \nikos{the global network}, which, as the authors show, is NP-complete. However, this leaves open the case of a nonempty core set $\C$, since a solution to the minimal reconstruction problem need not constitute an elementary mode. We conjecture that the problem remains NP-hard when $\C$ is nonempty, but we are not pursuing this question here.)

Our approach hinges on the observation that a \nikos{consistent induced subnetwork of the global network can be defined via a set of modes of the latter}: 
\begin{theorem}
Let $\V$ be a set of modes of \nikos{the global network $\{ \N, S_\N  \}$}, and let $\A = \cup_{v \in \V} \ supp(v)$ be the union of the supports of these modes. \nikos{The induced subnetwork $\{ \A, S_\A  \}$ is consistent.}
\end{theorem}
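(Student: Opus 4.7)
The plan is to verify consistency of $\{\A, S_\A\}$ directly from the definition: for every reaction $i \in \A$ I will exhibit a feasible flux vector of the subnetwork whose $i$-th entry is nonzero. Given such an $i$, the very construction of $\A$ as the union $\bigcup_{v \in \V} supp(v)$ guarantees that some $v \in \V$ satisfies $i \in supp(v)$, hence $v_i \neq 0$. The natural candidate for the witnessing mode is the restriction $w$ of $v$ to the coordinates indexed by $\A$.

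The main step is to check that $w$ is indeed a mode of the subnetwork. The crucial observation is that $supp(v) \subseteq \A$, so every entry of $v$ outside $\A$ vanishes. Consequently $S_\A w = S_\N v = 0$: dropping the columns of $S_\N$ indexed outside $\A$ along with the corresponding zero entries of $v$ does not alter the matrix-vector product. The flux bounds collected in $\B$ are coordinate-wise (irreversibility and individual lower/upper bounds per reaction), so $w$ inherits feasibility from $v$ on the restricted coordinates. Thus $w$ is a mode of $\{\A, S_\A\}$ with $w_i = v_i \neq 0$, which shows that reaction $i$ is not blocked in the induced subnetwork. Since $i \in \A$ was arbitrary, the subnetwork contains no blocked reactions and is therefore consistent.

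I do not expect a real obstacle here; the argument is essentially bookkeeping. The only point that warrants explicit mention is that the identity $S_\N v = S_\A w$ relies precisely on the inclusion $supp(v) \subseteq \A$, and that the feasible set $\B$ decouples reaction-by-reaction, so restriction preserves feasibility. Were the global feasibility constraints to couple fluxes of reactions inside $\A$ with those outside (which is not the case under the stated model), the argument would need to be adapted; under the assumptions in force, no such complication arises.
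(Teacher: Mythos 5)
Your argument is correct and follows essentially the same route as the paper's proof: truncate each mode $v\in\V$ to the coordinates indexed by $\A$, note that $S_\A v_\A = S_\N v = 0$ because $supp(v)\subseteq\A$, and conclude that every reaction in $\A$ lies in the support of some mode of the induced subnetwork. Your explicit remarks on why the truncation preserves the stoichiometric identity and why the bounds in $\B$ restrict coordinate-wise simply spell out steps the paper treats as immediate.
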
 
\begin{proof}  
For each $v \in \V$, let $v_{\!\A}$ be the `truncated' $v$ after dropping all dimensions not indexed by $\A$. Clearly, $S_{\!\A} \hspace{1pt} v_{\!\A} =0$, therefore each $v_{\!\A}$ is a mode in the reduced model $\{\A, S_{\!\A}\}$. 
By construction of $\A$, each reaction in $\A$ is in the support of some $v \in \V$, and hence also in the support of some mode $v_{\!\A}$ of the reduced model. 
\end{proof}
This simple result allows one to cast the reconstruction problem as a search problem over sets of modes of the global network: 
\begin{equation}
  \begin{aligned} 
    \min_{\V} 	&& card( \A ) \hspace{-25pt} \\
    \mbox{s.t.} 	&& \A & = \bigcup_{v \in \V} \ supp(v) \\
				&& \C & \subseteq \A \\ 
				&& \nikos{\forall} & \nikos{v \in\V : \ \  S_{\!\N}^{} \hspace{1pt} v  = 0, \ v \in \B  }\,.
  \end{aligned}
  \tag{NLP-\theequation} \label{nlp-1} \stepcounter{equation}
\end{equation}
\nikos{Note that this optimization problem involves searching for a set $\V$ of modes of $\{ \N, S_\N  \}$, such that the union of the support of these modes (the set $\A$) is a minimal-cardinality set that contains the core set $\C$. In order to practically make use of this theorem, one has to define a search strategy over modes. Next we discuss two possibilities. The first gives rise to an exact algorithm, but this algorithm does not scale to large networks. The second is a scalable greedy approach that gives rise to {\rm\textsc{fastcore}}.}

\subsubsection*{\nikos{Exact reconstruction via mixed integer linear programming}}

\nikos{Note that, without loss of generality, in \nlp{1} we can restrict the search for $\V$ over all {\em elementary modes} of the global network, since the union of their supports covers the whole set $\N$. As we show next, if all elementary modes are available, \nlp{1} can be cast as a mixed integer linear program (MILP) and solved exactly. This MILP is defined as follows. Let $r$ be the number of elementary modes, and $\{m_1,\ldots,m_r\}$ be a set of length-$n$ binary vectors, where each vector $m_j$ captures the support of elementary mode $j$ (so, its $i$th entry is 1 if reaction $i$ is included in elementary mode $j$, and 0 otherwise). Also, let $c=(c_1,\ldots,c_n)$ be a length-$n$ binary vector with $c_i=1$ if reaction $i$ is included in the core set $\C$, and $c_i=0$ otherwise. The decision variables of the MILP are a length-$n$ binary vector $x=(x_1,\ldots,x_n)$ and a length-$r$ real vector $y=(y_1,\ldots,y_r)$. At an optimal solution of the MILP, the set $\A$ is defined as $\A = \{ i \in \N: \ x_i^* = 1 \}$. 
\begin{theorem}
When all elementary modes are available, the following \milp{1} solves \nlp{1} exactly.
\begin{equation}
  \begin{aligned} 
    \min_{x,y} 	&& \sum_i x_i \hspace{-18pt} \\
    \mbox{s.t.} && x & \geq \frac{1}{r} \sum_j m_j y_j \\ 
				&& c & \leq \sum_j m_j y_j \\
				&& y & \in [0,1] \\
				&& x & \in\{0,1\} \,.
  \end{aligned}
  \tag{MILP-\theequation} \label{milp-1} \stepcounter{equation}
\end{equation}
\end{theorem}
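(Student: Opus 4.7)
The plan is to establish an objective-preserving correspondence between feasible solutions of \milp{1} and feasible solutions of \nlp{1}, after first arguing that the search over $\V$ in \nlp{1} can be restricted, without loss of generality, to sets of elementary modes. I would justify this restriction by invoking the standard fact that every mode $v$ of $\{\N, S_\N\}$ admits a conic decomposition $v = \sum_k \lambda_k e_k$ into elementary modes $e_k$ with $supp(e_k) \subseteq supp(v)$ (handling reversibility by splitting reversible reactions, then mapping the resulting supports back to the original reaction index set). Since $v_i \neq 0$ forces at least one $e_k$ with $\lambda_k > 0$ to satisfy $(e_k)_i \neq 0$, replacing each $v \in \V$ by its elementary components yields a set $\V'$ of elementary modes with $\bigcup_{v' \in \V'} supp(v') = \bigcup_{v \in \V} supp(v)$ that still contains $\C$. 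Hence restricting the search to elementary modes preserves the optimal value of \nlp{1}.

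For the forward direction, given a set $\V$ of elementary modes feasible for \nlp{1}, I would set $y_j = 1$ if elementary mode $j$ lies in $\V$ (and $y_j = 0$ otherwise), and $x_i = 1$ iff $i \in \bigcup_{v \in \V} supp(v)$. The first MILP constraint reduces to $x_i \geq \frac{1}{r}\,|\{j \in \V : i \in supp(\text{mode}_j)\}|$, whose right-hand side lies in $[0,1]$ and vanishes precisely when $x_i = 0$, so the constraint holds. The core-cover constraint $c \leq \sum_j m_j y_j$ holds because every $i \in \C$ appears in the support of some mode in $\V$. The objective $\sum_i x_i$ then equals $|\A|$, so the \milp{1} value is at most the \nlp{1} value.

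For the reverse direction, given an \milp{1} optimum $(x^*, y^*)$, I would define $\V = \{\text{mode}_j : y_j^* > 0\}$ and $\A' = \bigcup_{v \in \V} supp(v)$. The core-cover constraint applied to $i \in \C$ yields $\sum_{j : y_j^* > 0} (m_j)_i y_j^* \geq 1 > 0$, so some mode in $\V$ contains $i$; hence $\C \subseteq \A'$. For $i \in \A'$, the quantity $\sum_j (m_j)_i y_j^*$ is strictly positive and bounded above by $r$, so $x_i^* \geq \frac{1}{r}\sum_j (m_j)_i y_j^*$ combined with $x_i^* \in \{0,1\}$ forces $x_i^* = 1$; for $i \notin \A'$, minimality of $\sum_i x_i^*$ forces $x_i^* = 0$ since the corresponding constraint reads $x_i^* \geq 0$. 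Thus $\sum_i x_i^* = |\A'|$, matching the objective of \nlp{1} evaluated at $\V$, and the two optima coincide.

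The main subtlety lies in the decomposition step of the WLOG reduction: one must argue carefully that decomposing a general mode into elementary modes preserves support containment in the presence of reversible reactions, which requires a brief detour through the irreversible reformulation. A secondary point worth stressing is the role of the normalizing factor $\frac{1}{r}$: it is precisely what keeps the right-hand side of the first MILP constraint in $[0,1]$ regardless of the fractional values taken by $y$, so that the binary variable $x_i$ correctly indicates membership in $\A$ while still allowing $y$ to be relaxed to $[0,1]$.
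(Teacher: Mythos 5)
Your proof is correct and follows essentially the same route as the paper's: the variables $y$ select elementary modes, $x$ encodes the union $\A$ of their supports (with the factor $\frac{1}{r}$ keeping the first constraint's right-hand side in $[0,1]$), and minimality of $\sum_i x_i$ makes $\A$ exactly that union while the second constraint forces $\C\subseteq\A$. You are in fact more careful than the paper on two points it treats informally, namely the explicit two-directional, objective-preserving correspondence between feasible solutions of \milp{1} and \nlp{1}, and the justification (via cancellation-free decomposition of an arbitrary mode into elementary modes after splitting reversible reactions) that restricting the search in \nlp{1} to elementary modes loses no generality.
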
 
\begin{proof}  
By definition, $x_i^*=1$ implies that reaction $i$ will be included in the reconstruction $\A$, hence the objective minimizes the cardinality of $\A$. The sum $\sum_j m_j y_j^*$ is a vector whose support is the union of the supports of all selected elementary modes at the solution, where an elementary mode $j$ is selected when $y_j^*>0$. The first constraint $x \geq \frac{1}{r} \sum_j m_j y_j$ therefore imposes that the set $\A$ must contain the union of the supports of the selected elementary modes at the solution. (The factor $\frac{1}{r}$ ensures that $\frac{1}{r} \sum_j m_j y_j \leq 1$). Since superfluous reactions are removed by the minimization of $\sum_i x_i$ in the objective, the above implies that $\A$ is precisely the union of the supports of the selected elementary modes at the solution. The second constraint $c \leq \sum_j m_j y_j$ imposes that the core set must be included in the union of the supports of the selected elementary modes at the solution, and hence the core set must be included in $\A$. Therefore, all constraints of \nlp{1} are satisfied at the optimal solution of \milp{1}, and since the two programs minimize the same objective, an optimal solution of \milp{1} must be an optimal solution of \nlp{1}.
\end{proof}
Note, however, that \milp{1} does not scale to large networks, for the following reasons: First, it requires computing all elementary modes of the global network, which can be a very large number \cite{Gagneur04}. Second, the binary decision variables $x_i$ index all reactions of the global network, and therefore \milp{1} needs to search over a binary hypercube of dimension $n$, which can be prohibitive for large $n$. Nonetheless, it is reassuring to know that an exact solution to the context-specific network reconstruction problem is possible, albeit with high complexity. Next we describe {\rm\textsc{fastcore}}, an approximate greedy algorithm that scales much better to large networks, and we compare it to \milp{1} in the Results section.}

\subsubsection*{\nikos{Greedy approximation and the {\rm\textsc{fastcore}} algorithm}}

\nikos{An alternative search strategy for computing $\V$ in \nlp{1} is a greedy approach,  reminiscent of greedy heuristics for the related {\em set covering problem} \cite{Chvatal79}. This is the idea behind the proposed {\rm\textsc{fastcore}} algorithm: We build up the set $\V$ in a greedy fashion, by computing in each iteration a new mode of the global network.} Further, as a means to approximately minimize $card(\A)$, each added mode is constrained to have {\em sparse} support outside $\C$. This is implemented via $L_1$-norm minimization, which is a standard approach to computing sparse solutions to (convex) optimization problems \cite{Boyd04,Julius08}. 
 
\algrenewcommand{\algorithmicrequire}{\textbf{Input:}}
\algrenewcommand{\algorithmicensure}{\textbf{Output:}}
\algrenewcommand{\algorithmicforall}{\textbf{for each}}
\newcommand\ForEach{\ForAll}
    
\begin{algorithm}[t]
\small
\caption{{The {\rm\textsc{fastcore}} algorithm} \label{algo:algo}}
    \begin{algorithmic}[1]
        \Require \nikos{A consistent metabolic network model $\{ \N, S_\N  \}$ and a reaction set $\C \subset \N$.}
        \Ensure \nikos{A consistent induced subnetwork $\{ \A, S_\A  \}$ of $\{ \N, S_\N  \}$ such that $\C \subseteq \A$.}      
        \Statex \vspace*{-6pt}
        \Function{{\rm\textsc{fastcore}}}{$\,\N, \C\,$}
        \State $\J \gets \C\cap\I$, \, $\Penal \gets \N \setminus \C$ \label{step-init}
        \State $flipped \gets False$, $singleton \gets False$
		\State $\A \gets \Call{{\rm \textsc{FindSparseMode}}}{\,\J, \Penal, singleton\,}$
        \State $\J \gets \C \setminus \A$
        \While{$\J \neq \emptyset$\,}  \label{step-loop}
            \State $\Penal \gets \Penal \setminus \A$  \label{step-Pupdate}
            \State $\A \gets \A \cup \Call{{\rm \textsc{FindSparseMode}}}{\,\J, \Penal, singleton\,}$
            \If{$\J \cap \A \neq \emptyset$}
                \State $\J \gets \J \setminus \A$, \ $flipped \gets False$  \label{step-Jupdate}
            \Else
                \If{$flipped$}
                    \State $flipped \gets False$, \ $singleton \gets True$
                \Else
                    \State $flipped \gets True$
            			\If{$singleton$}
                			\State $\tilde\J \gets \J(1)$ \quad (the first element of $\J$)
            			\Else
                			\State $\tilde\J \gets \J$
            			\EndIf                    
                    \ForAll{$i \in \tilde\J \setminus \I$ }
                    \State flip the sign of the $i$'th column of $S_\N$ and \label{step-flipsigns}
                    \State swap the upper and lower bounds of $v_i$
                		\EndFor
                \EndIf
            \EndIf
        \EndWhile
        \State \Return $\A$
                \EndFunction
    \end{algorithmic}
\end{algorithm}
\begin{algorithm}[h!]
\small
\caption{{The \textsc{FindSparseMode} function} 
\label{algo:findsparsemode}}
    \begin{algorithmic}[1]   
        \Require A set $\J \subseteq \C$, a penalty set $\Penal \subseteq \N\setminus\C$, and the $singleton$ flag.
        \Ensure The support of a mode that is dense in $\J$ and sparse in $\Penal$.
        \Statex \vspace*{-6pt}
        \Function{{\rm \textsc{FindSparseMode}}}{$\,\J, \Penal, singleton\,$}
            \If{$\J=\emptyset$}
                \State \Return $\emptyset$
            \EndIf   
            \If{$singleton$}
				\State $v^* \gets$ \lp{cardinality} on set $\J(1)$            
			\Else
				\State $v^* \gets$ \lp{cardinality} on set $\J$            
			\EndIf                     
            \State $\K \gets \{i \in \J : v_i^* \ge \varepsilon \}$ \label{step-K}
            \If{$\K=\emptyset$}
                \State \Return $\emptyset$
            \EndIf
            \State $v^* \gets$ \lp{l1norm} on sets $\K, \Penal$  \label{step-lp9}
            \State \Return $\{i \in \N : |v_i^*| \ge \varepsilon \}$
        \EndFunction
    \end{algorithmic}
\end{algorithm}

The overall {\rm\textsc{fastcore}} algorithm is shown in \algo{algo}. The algorithm maintains a set $\J\subseteq\C$ that is initialized with the irreversible reactions in $\C$, and a `penalty' set $\Penal = (\N\setminus\C)\setminus\A$ that contains all reactions outside $\C$ that have not been added yet to the set $\A$. Each iteration adds to the set $\A$ the support of a mode that is dense in $\J$ (i.e., contains as many nonzero fluxes in $\J$ as possible) and sparse in $\Penal$ (i.e., contains as many zero fluxes in $\Penal$ as possible), computed by the function \textsc{FindSparseMode} (\algo{findsparsemode}). This function first applies an \lp{cardinality} to compute an active subset $\K$ of $\J$, and then it applies the following $L_1$-norm minimization LP constrained by the set $\K$:
\begin{equation}
  \begin{aligned} 
    \min_{v,z} 	&& \sum_{i \in \Penal}  z_i  \hspace{-12pt} \\
    \mbox{s.t.} && v_i 	& \in [-z_i, z_i] 			& \quad & \forall i \in \Penal, \ z_i\in\mathbb{R}_+ \\
                	&& v_i 	& \geq \varepsilon 		& \quad & \forall i \in \K  \\
    				&& \nikos{S_\N^{}} \,v 	& = 0  						& \quad & v \in \B \, .
  \end{aligned}
  \tag{LP-\theequation} \label{lp-l1norm} \stepcounter{equation}
\end{equation}
The \lp{l1norm} minimizes $\sum_{i \in \Penal} \vert v_i \vert$, the $L_1$ norm of fluxes in the penalty set $\Penal$ (expressed via epigraphs), subject to a minimum flux constraint on the set $\K$. However, some care is needed to preempt false negative solutions arising from the minimization of $L_1$ norm in \lp{l1norm}. For example, suppose in the network of \fig{example} that \nikos{the global network} comprises all reactions except A$\leftrightarrow$B, and $\C=\J=\K=\{6\}$ and $\Penal=\{1,3,4,5\}$. In this case, \lp{l1norm} could settle to a solution 
$[v_1,v_3,v_4,v_5,v_6] = [\frac{\varepsilon}{2},\varepsilon,0,0,\varepsilon]$. The flux $v_1$, being below $\varepsilon$, would be treated as zero by \textsc{FindSparseMode}, in which case the reaction $\rightarrow$2A would be erroneously excluded from the reconstruction. A simple way to avoid this is to use a scaled version of $\varepsilon$ (we used $10^5 \varepsilon$) in the second constraint of \lp{l1norm}, with an equal scaling of all flux bounds in~$\B$.

The {\rm\textsc{fastcore}} algorithm first goes through the $\I\cap\C$ reactions (step \ref{step-init}), and then through the $\R\cap\C$ ones (and eventually through each individual reversible reaction in the core set; when $singleton = True$). The $flipped$ variable ensures that a reversible reaction is tested in both the forward and negative direction.
The algorithm terminates when all reactions in $\C$ have been added to $\A$, which is guaranteed since in the main loop the set $\J$ never expands (step \ref{step-Jupdate}) and the global network is consistent.
Note that {\rm\textsc{fastcore}} has no free parameters besides the flux threshold~$\varepsilon$.

The {\rm \textsc{fastcc}} algorithm for detecting the consistent part of an input network (see previous section) can be viewed as a variant of {\rm\textsc{fastcore}}$(\N,\N)$ in which the steps \ref{step-K}--\ref{step-lp9} of \textsc{FindSparseMode} are omitted (and there is no $\Penal$ set). 
It is easy to verify that {\rm \textsc{fastcc}} is complete, in the sense that it will always report consistency if the network is consistent, and if not, it will reveal the set of \nikos{blocked} reactions.

\subsection{\nikos{Related work}}

\begin{table*}[!ht]
\caption{\nikos{Summary of the main characteristics of GIMME \cite{Becker08}, MBA \cite{Jerby10}, iMAT \cite{Zur10}, mCADRE \cite{Wang12}, INIT \cite{Agren12}, and {\rm\textsc{fastcore}} (this paper) reconstruction algorithms.}  \label{tab-table3}}
\begin{center}
\nikos{
 \begin{tabular}{l p{1.5cm} p{1.5cm} p{1.5cm} p{1.5cm} p{1.5cm} p{1.5cm} }  
~&{GIMME}& {MBA}& {iMAT}&{mCADRE}& {INIT} & {\rm\textsc{fastcore}} \\
\hline
Optimization  & LP 	& MILP 	& MILP 	& MILP    & MILP  & LP \\ 
Computational cost     & low 	&  high 	&  high 	&    high &  high & low \\ 
Function required & yes	&   no 	& no & 	yes 	& yes & no \\ 
Omics required & yes	&   optional 	& yes & yes & yes & no	\\ 
Code available & yes	&   yes 	& yes&   yes & no & yes \\ 
\hline
\end{tabular}
}
\end{center}
\end{table*}

\nikos{Several algorithms have been published in the last years for extracting condition-specific models from generic genome-wide models like Recon\,1. Among them, mCADRE \cite{Wang12}, INIT \cite{Agren12}, iMAT \cite{Zur10}, MBA \cite{Jerby10} and GIMME \cite{Becker08} are the most commonly used (see \tab{table3} for an overview). Here we provide a short outline of the different algorithms, and refer to \cite{Blazier12} for a more extensive overview. For GIMME, iMAT, and MBA, we briefly discuss some notable differences to {\rm\textsc{fastcore}}.

GIMME \cite{Becker08} takes as input microarray data and a biological function to optimize for, such as biomass production. GIMME starts by removing reactions with associated expression levels below a user-defined threshold, and then it optimizes for the specified biological function using linear programming. In case the pruning steps compromise the input biological function, GIMME reintroduces some previously removed reactions that are in minimal disagreement with the expression data.
Since GIMME has not been designed to include all core reactions in the solution (as {\rm\textsc{fastcore}} does), the reconstructions obtained by GIMME and {\rm\textsc{fastcore}} can differ significantly: \nikos{Running the {\em createTissueSpecific} function of the COBRA toolbox on a set of liver core reactions (see Section 3) treating them as expressed reactions (and adding a biomass reaction \cite{Wang12} and a sink reaction for glycogen to be used as optimization function), only about 50\% of the core reactions of the GIMME model were consistent at the solution. A fairer comparison would require adapting {\rm\textsc{fastcore}} to explicitly deal with omics data, which is outside the scope of the current work.}

iMAT \cite{Zur10} was originally designed for the integration of transcriptomic data. iMAT optimizes for the consistency between the experimental data and the activity state of the model reactions. iMAT tries to include modes composed of reactions associated to genes with high expression value, and therefore a threshold needs to be chosen to segregate between low, medium, and highly expressed genes. The computational demands of iMAT are high due to the repeated use of mixed integer linear programming. 
As with GIMME, direct comparison of iMAT to {\rm\textsc{fastcore}} is problematic. \nikos{Nevertheless, we applied iMAT (own implementation) on the liver problem (see Section 3), by setting the liver core reactions to RH (reaction high) and all non-core reactions to RL (reaction low). iMAT determined 549 core reactions as active, while 182 and 338 reactions were classified as undetermined and inactive, respectively. This means that about 50\% of the core reactions were lost during iMAT model building. As with GIMME, this demonstrates the difficulty of directly comparing {\rm\textsc{fastcore}} to algorithms that optimize different objectives.}

mCADRE \cite{Wang12} is similar to MBA, except that the pruning order is not random, but it depends on the tissue-specific expression evidence and weighted connectivity to other reactions of the network. Reactions that are associated to genes that are never tagged as expressed and which are not connected to reactions associated to highly expressed genes are first evaluated in the pruning step. Reactions are effectively removed if the removal does not impair core reactions and metabolic functions to carry a flux (mCADRE removes core reactions if the core/non-core reaction ratio is below a user-given threshold). mCADRE uses mixed integer linear programming and therefore it does not scale up to large networks (but it is in general faster than MBA). 

INIT \cite{Agren12} uses data retrieved from public databases in order to assess the presence of a certain reaction-respective metabolites in the cell type of interest. INIT uses mixed integer linear programming to build a model in which all reactions can carry a flux. Contrary to other algorithms, INIT does not rely on the assumption of a steady state, but it allows small net accumulation of all metabolites of the model.}

The closest algorithm to {\rm\textsc{fastcore}} is the MBA algorithm of Jerby et al.~\cite{Jerby10}. MBA takes as input two core sets of reactions, and it searches for a consistent network that contains all reactions from the first set, a maximum number of reactions from the second set (for a given tradeoff), and a minimal number of reactions from the global network. ({\rm\textsc{fastcore}} can be easily adapted to work with multiple core sets, by introducing a set of weights that reflect the confidence of each reaction to be active in the given context, and adding appropriate regularization terms in the objective functions of \lp{cardinality} and \lp{l1norm} that capture the given tradeoff. We will address this variant in future work.) Both {\rm\textsc{fastcore}} and MBA involve a search for a minimal consistent subnetwork, however the search strategy of {\rm\textsc{fastcore}} is very different to MBA: Whereas {\rm\textsc{fastcore}} iteratively expands the active set $\A$ starting with $\A=\emptyset$, MBA starts with $\A = \N$ and iteratively prunes the set $\A$ by checking whether the removal of each individual reaction (selected in random order) compromises network consistency. \nikos{As the pruning order affects the output model, this step of MBA is repeated multiple times. MBA builds a final model by adding one by one non-core reactions with the highest presence rate over all pruning runs, and it stops when a consistent final model is obtained. Due to the multiple pruning runs, MBA has very high computational demands.}
Consistency testing in MBA is carried out with the CMC algorithm that is based on \lp{jerby}, as explained earlier. 
Hence, {\rm\textsc{fastcore}}'s search strategy differs to MBA in two key aspects: First, consistency testing in {\rm\textsc{fastcore}} involves the maximization of flux cardinality (\lp{cardinality}) instead of sum of fluxes (\lp{jerby}), which results in fewer LP iterations. Second, the search for compact solutions in {\rm\textsc{fastcore}} involves $L_1$-norm minimization instead of pruning. The advantage of the former is that it can be encoded by a single LP, resulting in significant overall speedups (see Section 3.2).

\section{Results}

\begin{figure}
	\centering
	\newlength\figureheight 
	\newlength\figurewidth 
	\setlength\figureheight{4cm} 
	\setlength\figurewidth{4cm}
\includegraphics[scale=0.7]{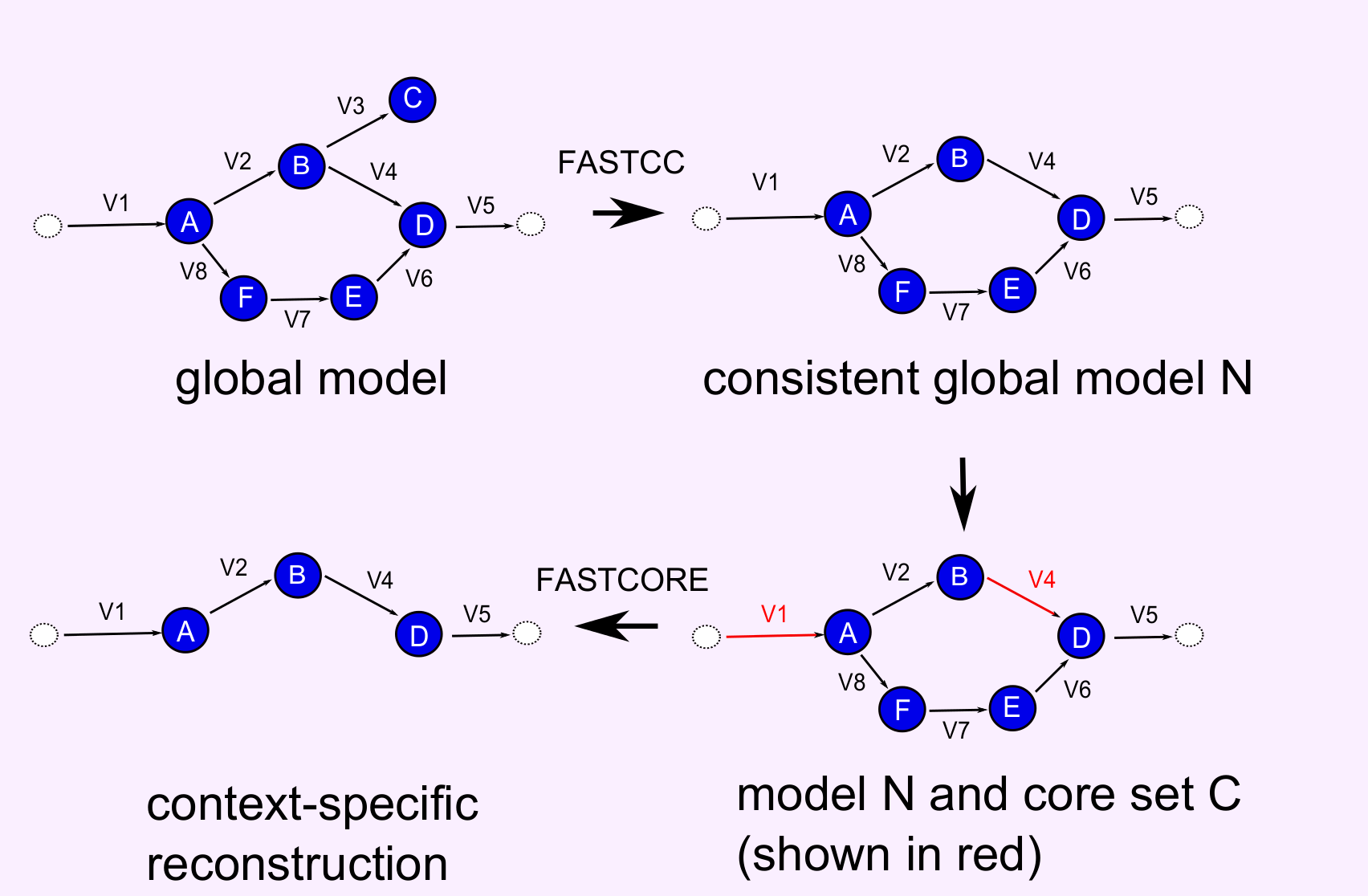}
	\caption{\nikos{Flowchart of the overall pipeline for generating consistent context-specific models.}}
	\label{fig:Flowchart}
\end{figure}

\nikos{Generic metabolic reconstructions like Recon\,2 are inconsistent models as they contain reactions that are not able to carry nonzero flux due to gaps in the network (see next section). The first step towards obtaining a consistent context-specific reconstruction is therefore to extract the consistent part of a global generic model. This can be achieved by {\rm \textsc{fastcc}} or other similar methods (see Section 2.2). The consistent global model serves then as input for the context-specific reconstruction with {\rm\textsc{fastcore}}. In \fig{Flowchart} we show a flowchart of the overall pipeline. }

We report results on two sets of problems, the first involving consistency verification of an input model, and the second involving the reconstruction of a context-specific model from an input model and a core set of reactions.
The {\rm\textsc{fastcore}} algorithm was implemented in the COBRA toolbox \cite{COBRA2}, using Matlab 2013a and the IBM CPLEX solver (version 12.5.0.0). Test runs were performed on a standard 1.8 GHz Intel Core i7 laptop with 4 GB RAM running Mac OS X 10.7.5. In all experiments we used flux threshold $\varepsilon=$1e-4.
The software is available from\,
\verb!bio.uni.lu/systems_biology/software/!

\subsection{Consistency testing}

In the first set of experiments we applied {\rm \textsc{fastcc}}, the consistency testing variant of {\rm\textsc{fastcore}}, for consistency verification of \nikos{four} input models, 
and compared it against the FastFVA algorithm of Gudmundsson and Thiele \cite{Gudmundsson10}, and an own implementation (based on {\rm \textsc{fastcc}} but with \lp{jerby} replacing \lp{cardinality}) of the CMC algorithm of Jerby et al.~\cite{Jerby10}. We also tested the FVA algorithm of the {\em reduceModel} function of the COBRA toolbox \cite{COBRA2}, and the MIRAGE algorithm of Vitkin and Shlomi \cite{Vitkin12}, but we do not include them in the results as they performed worse than the reported ones.
The input models were the following:
\begin{itemize}



\item \nikos{c-Yeast ($\#\N=1204$), the consistent part of a yeast model \cite{Zomorrodi10}.}

\item \nikos{c-Ecoli ($\#\N=1718$), the consistent part of an {\em E.~coli} model \cite{Orth11}. (Here we set to 1000 the upper bounds of all fluxes that were fixed to zero, and we multiplied all bounds by 1000 to avoid numerical issues.)}

\item c-Recon1 ($\#\N=2469$), the consistent part of Recon\,1 \cite{Duarte07}. (Recon\,1 was found to contain 1273 \nikos{blocked} reactions.)

\item c-Recon2 ($\#\N=5834$), the consistent part of Recon\,2 \cite{Thiele13}. (Recon\,2 was found to contain 1606 \nikos{blocked} reactions.)
\end{itemize}

\begin{table*}
\caption{Comparing {\rm \textsc{fastcc}} to fastFVA \cite{Gudmundsson10} and CMC \cite{Jerby10} on \nikos{four} input models.  \label{tab-table1}}
\begin{center}
\begin{tabular*}{\hsize}{@{\extracolsep{\fill}}lcrrccrrccrrccrrc}
~ &
 \multicolumn{4}{c}{\nikos{c-Yeast}} &
 \multicolumn{4}{c}{\nikos{c-Ecoli}}&
 \multicolumn{4}{c}{c-Recon1}&
 \multicolumn{4}{c}{c-Recon2} \\
 \hline
~&~& \#\,LPs & time$^*$\hspace*{-4pt} &~&~& \#\,LPs & time &~&~& \#\,LPs & time &~&~& \#\,LPs & time &~\\
\hline
fastFVA 	&~& 2408 &   3  &~&~& 3436 & 3	 &~&~& 4938 & 9 	 &~&~& 11668 & 207	&\\ 
CMC     &~&   18 & 0.5  &~&~&   25 & 1	 &~&~&   49 & 2   	 &~&~&    42 &  11 &\\ 
{\rm \textsc{fastcc}} &~&    7 & 0.1  &~&~&    2 & 0.2  &~&~&    9 & 0.4  &~&~&    19 &   5&\\ 
\hline
\multicolumn{16}{l}{$^*$\footnotesize in seconds}
\end{tabular*}
\end{center}
\end{table*}

The results are shown in \tab{table1}. {\rm \textsc{fastcc}} is faster and it uses much fewer LPs than the other two algorithms. We note that fastFVA is based on an optimized Matlab/C++ implementation with LP warm-starts, while {\rm \textsc{fastcc}} is based on standard Matlab.
These results confirm the appropriateness of flux cardinality (\lp{cardinality}) as a metric for network consistency testing, in agreement with the theoretical analysis and the discussions above.

\subsection{Reconstruction of a liver model}

In the second set of experiments, we used the {\rm\textsc{fastcore}} algorithm to reconstruct a liver specific metabolic network model from the consistent part of Recon\,1 (c-Recon1, $\#\N=2469$), and we compared against an own implementation of the MBA algorithm of Jerby et al.~\cite{Jerby10}.
We applied the two algorithms in two settings. The first setting involves the liver specific input reaction set of Jerby et al.~\cite{Jerby10}, which is based on 779 `high' core and 290 `medium' core reactions (the latter set is supported by weaker biological evidence than the former). To allow a comparison with {\rm\textsc{fastcore}}, we defined a single core set as the union of the high and medium core reaction sets, and we applied the two algorithms on this core set. 
The second setting uses the `strict' liver model of Jerby et al.~\cite{Jerby10}, which contains 1083 high core reactions and no medium core reactions, and therefore allows a direct comparison with {\rm\textsc{fastcore}}.

\begin{table*}
\caption{Comparing {\rm\textsc{fastcore}} to MBA \cite{Jerby10} on liver model reconstruction from c-Recon1.\label{tab-table2}}
\begin{center}
\begin{tabular*}{\hsize}{@{\extracolsep{\fill}}lCPPPPCCPPPP}
~ &~& \multicolumn{4}{c}{liver core set ($\#\C=1069$)} &~&~&
 \multicolumn{4}{c}{strict liver core set ($\#\C=1083$)} \\
 \hline
~&~& $\#\A~$ & IR$^*$\hspace*{-6pt} $~$  & \#LPs & time$^\ddag$\hspace*{-4pt} &~&~& $\#\A~$ & IR$~$ & \#LPs & time \\
\hline
MBA &~&  1826 & 1573 & 72279 & 7383 &~&~ & 1888 & 1630 & 71546 & 6730 \\ 
{\rm\textsc{fastcore}} &~&  1746 & 1546 & 20 & 1 &~& ~ & 1818 & 1627 & 20 & 1 \\ 
\hline
\multicolumn{12}{l}{$^*$\footnotesize number of intracellular reactions} \\
\multicolumn{12}{l}{\parbox{\linewidth}{$^\ddag$\footnotesize 
the reported time (in seconds), as well as the number of LPs, refer to a single pruning step of MBA, \nikos{whereas $\#\A$ and IR refer to the full MBA}.} \hspace*{-95pt}} 
\end{tabular*}
\end{center}
\end{table*}

The results for the two settings are shown in \tab{table2}. \nikos{We note that for MBA, the reported number of LPs and the runtime refer to a single pruning iteration of the algorithm, whereas the size of each reconstruction refers to the final model after 1000 pruning iterations.} In both settings, {\rm\textsc{fastcore}} is several orders of magnitude faster than MBA, achieving a full reconstruction of a liver specific model in about one second, using a much smaller number of LPs. 
As MBA employs a greedy pruning strategy for optimization, the number of LPs that it uses and its total runtime can be very high, as also indicated by Wang et al.~\cite{Wang12} who reported runtime of a single pruning pass of MBA in the order of 10 hours on a 2.34 GHz CPU computer.

The reconstructed models by {\rm\textsc{fastcore}} are also more compact than those obtained by MBA, with a difference of 70-80 non-core reactions. \nikos{For the standard liver model, 1687 out of the 1746 reactions (96\%) of the {\rm\textsc{fastcore}} reconstruction appear also in the MBA reconstruction, whereas for the strict liver model the common reactions are 1739 out of 1818 (95\%). 
The two algorithms turned out to use alternative transporters to connect the core reactions: In the standard liver model, 46 out of 59 reactions that are present exclusively in the {\rm\textsc{fastcore}} reconstruction are transporter reactions or other reactions which are not associated to a specific gene and thus are not sufficiently supported in the core set, whereas in MBA the corresponding numbers are 116 out of 139 reactions. 
Note that both MBA and {\rm\textsc{fastcore}} try to minimize the number of added non-core reactions in order to obtain a compact consistent model. The above difference in the number of added non-core reactions between MBA and {\rm\textsc{fastcore}} is the result of the different optimization approaches taken by the two algorithms, and no biological relevance should be attributed to each reconstruction other than the one implied by the makeup of the core set. From this point of view, {\rm\textsc{fastcore}} performs in general better than MBA, as it tends to add fewer unnecessary reactions.
}

\begin{figure}
	\centering
\includegraphics[scale=0.5]{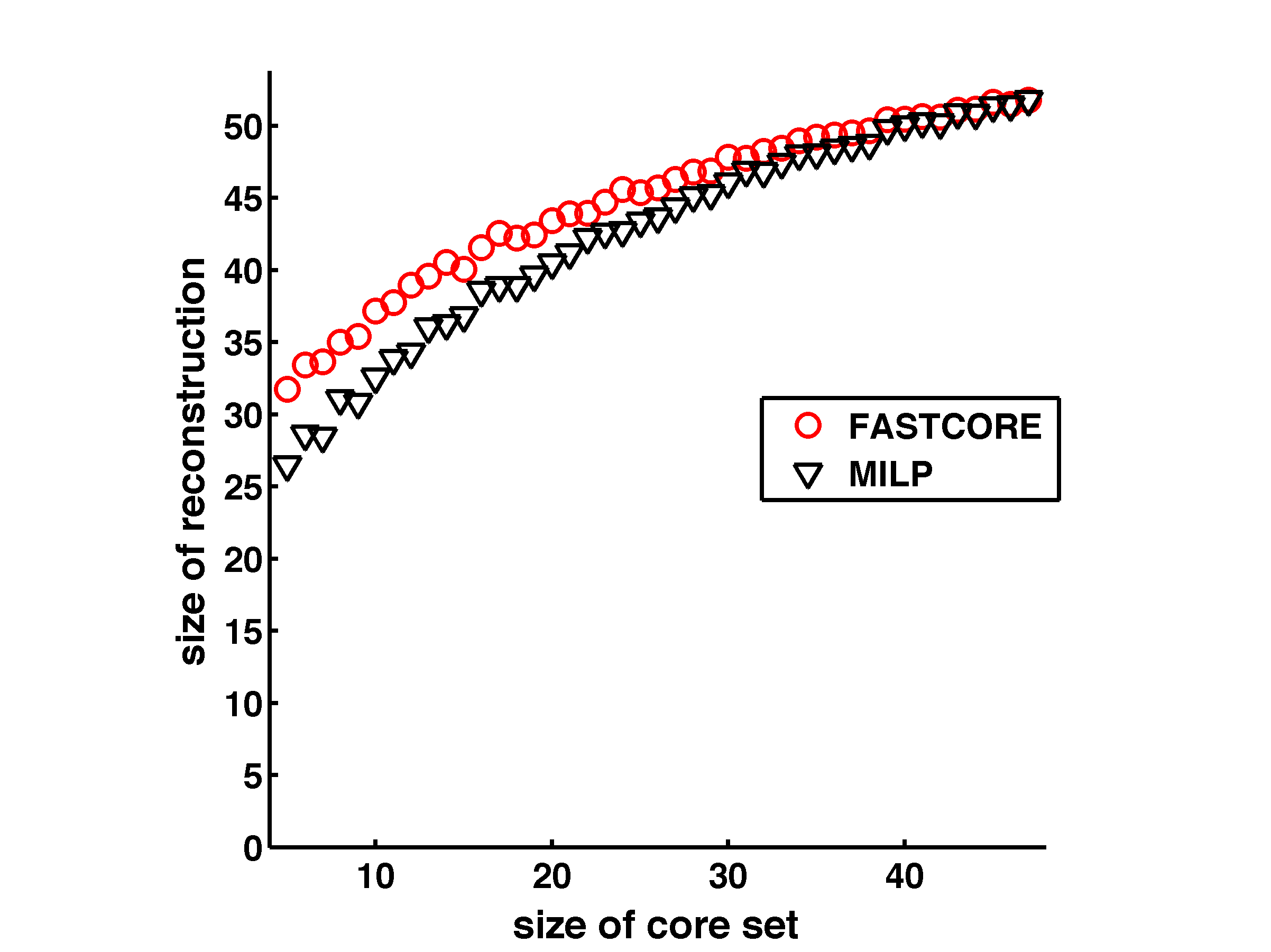}
	\caption{\nikos{Comparing {\rm\textsc{fastcore}} to an exact MILP solver on a small {\em E.~coli} model \cite{Orth10}. Shown are mean values of sizes of reconstructed models (over 50 repetitions for each core set; standard deviations were small and are omitted to avoid clutter) as a function of the size of the core set. {\rm\textsc{fastcore}} computes near-optimal reconstructions, which improve with the size of the core set.}}
	\label{fig:MILPvsFastcore}
\end{figure}

\nikos{We also compared the solutions of {\rm\textsc{fastcore}} to those of \milp{1}, using core sets that are randomly generated from a consistent subset of {\em E.~coli core} \cite{Orth10}. This is a small model with $\#\N=53$ and 414 elementary modes (unfortunately, the dependence of the \milp{1} solver to the number of elementary modes did not allow testing larger models). In \fig{MILPvsFastcore} we show the size of the reconstructed models (mean values) obtained with the exact MILP solver vs.\ {\rm\textsc{fastcore}}, as a function of the size of the core set. {\rm\textsc{fastcore}} is capable of obtaining very good approximations to the optimal solutions, which improve with the size of the core set. }

To evaluate {\rm\textsc{fastcore}}'s performance in correctly identifying liver reactions, we performed \nikos{repeated random sub-sampling validation in which {\rm\textsc{fastcore}} was used to reconstruct the liver metabolism based on a reduced, randomly selected `subcore' set of 80\% of the original core reactions. 
As in \cite{Jerby10}, we wanted to test whether {\rm\textsc{fastcore}} is able to recover a significant number of the 20\% left-out core reactions. To test for the enrichment of the left-out core reactions in the reconstructed model, we used a hypergeometric test, in which the total population is defined by all non-subcore reactions in the global network, the number of draws is defined as the number of non-subcore reactions included in the reconstruction, and the left-out core reactions are the `successes'. Under the null-hypothesis that there is no enrichment for the left-out core reactions when reconstructing the liver model based on the subcore set, we can compute a p-value for including at least the number of observed left-out core reactions in the reconstruction. We repeated this random sub-sampling procedure 500 times and computed the corresponding p-values. The median of these p-values was $0.0025$, indicating the ability of {\rm\textsc{fastcore}} to capture liver-specific reactions that were included in the original core set.}

As argued above, the reconstructions obtained by {\rm\textsc{fastcore}} need not optimize for cellular functions other than the ones implied by the composition of the input core set, and it is an interesting research question how to modify {\rm\textsc{fastcore}} so that it can explicitly capture functional requirements in its reconstructions. Nevertheless, it is of interest to test whether the current version of {\rm\textsc{fastcore}} can produce reconstructions that {\em are} functionally relevant, perhaps for slight variations of the core set. To this end, as in \cite{Jerby10}, we checked whether the (standard) liver model reconstructed by {\rm\textsc{fastcore}} can perform gluconeogenesis from glucogenic amino acids, glycerol, and lactate (altogether 21 metabolites). If not yet included, transporters from the extracellular medium to the cytosol were added to the model (glycerol, glutamate, glycine, glutamine, and serine). This was necessary as the transport reactions were not sufficiently supported in the core set. This `extended' liver model was able to convert 17/21 metabolites (vs 12/21 metabolites of the non-extended model). 
The extended liver model was then used to simulate the liver disorders hyperammonemia and hyperglutamenia, which affect the capacity to metabolize dietary amino acids into urea \cite{Jerby10}. Loss of function mutations of three enzyme-coding genes, argininosuccinate synthetase (ASS), argininosuccinate lyase (ASL), and ornithine transcarbamylase (OTC) were identified in patients suffering from these disorders. The rates of the reactions controlled by the three genes were fixed to 500, 250, or zero, to mimic the healthy homozygote (no mutation), heterozygote (loss of one allele), and the complete loss of function, respectively. 
To allow for a comparison with the experimental study of Lee et al.~\cite{Lee00} where labeled 15N-glutamine was administrated to patients suffering from inborn errors affecting the three genes, we explicitly shut down the influx of other potential nitrogen sources in the liver model, thereby simulating only the uptake and metabolism of glutamine. By allowing the influx of only one nitrogen source, the fate of the latter could be determined exactly in the model.
The ratio of urea secretion level over glutamine absorption was computed by sampling over the feasible space \cite{Price04}. In accordance with the wet lab observations \cite{Lee00}, the severity of the disorders, characterized by the mean urea over glutamine ratio, increased with the level of loss of function of the three genes ASS, ASL, and OTC (see \fig{Urea}). 
Null patients showed no native production of urea. Overall, the ratios predicted by the {\rm\textsc{fastcore}} model faithfully match the experimentally observed ones \cite{Lee00}. (The corresponding ratios reported by Jerby et al.\ when using the MBA algorithm~\cite{Jerby10} matched less well the experimental observations, probably because of the cross-feeding of nitrogen to urea from multiple nitrogen sources. By running the above procedure on the MBA model, we noticed that both models attained comparable urea / glutamine flux ratios.) To summarize, the above experiments demonstrate that, by an informed choice of the core set and influx bounds, {\rm\textsc{fastcore}} can indeed give rise to functionally relevant models.

\begin{figure}
	\centering
	\includegraphics[scale=1]{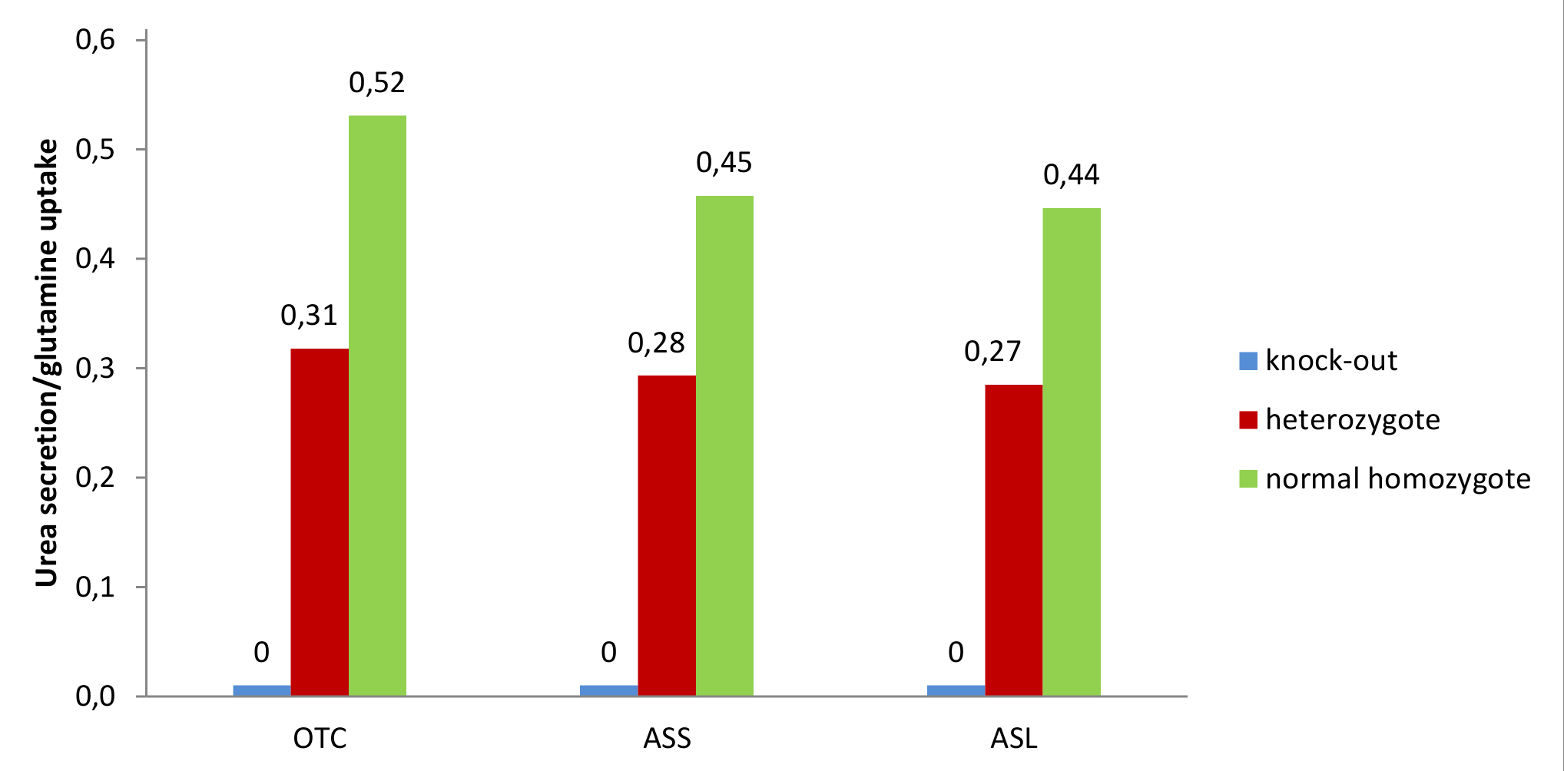}
	\caption{{Mean urea/glutamine ratio in the extended liver model obtained by {\rm\textsc{fastcore}}.} Healthy (normal homozygote), partial (heterozygote) and full knock-out cases. See text for details. }
	\label{fig:Urea}
\end{figure}

\subsection{Reconstruction of a murine macrophage model}

We also used the {\rm\textsc{fastcore}} algorithm to build a cell-type specific murine macrophage model from the consistent part of Recon1bio (comprising $\#\N=2474$ reactions). Recon1bio ($\#\N=3745$) is a modified Recon\,1 model that contains three extra reactions (biomass, NADPOX, and a sink reaction to balance the glycogenin self-glucosylation reaction) \cite{Bordbar12}. We used a core set comprising 300 (out of 382) proteomics derived Raw264.7 macrophage reactions, as described by Bordbar et al.~\cite{Bordbar12}. (The remaining 82 reactions could not be added to the core set as they are situated in an inconsistent region of Recon\,1 and therefore carry a permanent zero net flux.)
For their macrophage reconstruction, Bordbar et al.\ used, among other methods, GIMMEp---a variant of the GIMME algorithm \cite{Becker08} that is similar to the MBA algorithm---and they obtained a network model containing 1026 intracellular reactions. Our main interest was to investigate whether {\rm\textsc{fastcore}} can obtain a functional network that is at least as compact as the one obtained with GIMMEp. {\rm\textsc{fastcore}} generated (in about one second and using 11 LPs) a consistent network model of 953 reactions, 831 of which are intracellular reactions. This is a much more compact model than the one obtained with GIMMEp.

\section{Discussion}

{\rm\textsc{fastcore}} is a generic algorithm for context-specific metabolic network reconstruction from genome-wide metabolic models, and it was motivated by requirements of fast computation and compactness of the output model. 

The key advantage of having a fast reconstruction algorithm is that it permits the execution of multiple runs in order to optimize for extra parameters or test different core sets extracted from the input data \cite{Folger11,Wang12}. For example, when working with gene expression data, the definition of the core set may depend on the threshold used to segregate between high expression genes (core reactions) and low expression genes (non-core reactions) \cite{Becker08}. As the choice of threshold is rather arbitrary, a practical approach could involve evaluating the robustness of the output model as a function of the chosen threshold. {\rm\textsc{fastcore}} can perform this analysis in a few minutes, whereas for the same problem other algorithms would need hours or days. 
(Algorithms like GIMME or GIMMEp that require manual curation and assembly of subnetworks, would also fail in this kind of task.) Another example where fast computation is imperative is cross-validation. In the current study (see Section 3) we ran a \nikos{random sub-sampling validation} procedure 500 times, an operation that took a few minutes with {\rm\textsc{fastcore}} but that would barely be manageable with other reconstruction algorithms. Other examples where fast computation is important are time-course experiments or experiments involving different patients or conditions \cite{Jerby12}. There {\rm\textsc{fastcore}} could more easily identify differential models over time and/or input conditions. 

Compactness is a key concept in various research areas of biology, such as the minimal genome \cite{Morowitz84,Maniloff96}. 
Notwithstanding, the requirement of model compactness seems to be in disagreement with the observation that biological systems are fairly redundant and this redundancy serves a specific purpose, namely, the fast adaptation to changes in the environment. Alternative pathways that perform similar functions are known to be expressed in different environmental conditions, allowing for instance to metabolize another type of sugar when glucose is not available \cite{Suckow96}. 
At any rate, the pursuit of compactness in metabolic network reconstruction need not be in conflict with the notion of redundancy. Alternative pathways will be included in a reconstructed model as long as `redundant' reactions that are supported by biological evidence are included in the core set.

\section*{Acknowledgments}
We would like to thank Ines Thiele, Ronan Fleming, Nils Christian, Evangelos Symeonidis, Nathan Price, and Rudi Balling for their feedback.

\small
\bibliographystyle{plain}

\end{document}